\let\stdparagraph\paragraph
\renewcommand\paragraph{\vspace*{1em}\stdparagraph}
\renewcommand{\cite}[1]{{[\cites{#1}}}
\newcommand{\Gf}{\ensuremath{\mathfrak{G}}}
\begin{document}
\title[Regularizing Feynman path integrals using the gen. KV trace]{Regularizing Feynman path integrals using the generalized Kontsevich-Vishik trace}

\author{Tobias Hartung}
\address{Department of Mathematics, King's College London, Strand, London WC2R 2LS, United Kingdom}
\email{tobias.hartung@kcl.ac.uk}
\urladdr{www.nms.kcl.ac.uk/tobias.hartung}

\date{\today}

\begin{abstract}
  A fully regulated definition of Feynman's path integral is presented here. The proposed re-formulation of the path integral coincides with the familiar formulation whenever the path integral is well-defined. In particular, it is consistent with respect to lattice formulations and Wick rotations, i.e., it can be used in Euclidean and Minkowskian space-time. The path integral regularization is introduced through the generalized Kontsevich-Vishik trace, that is, the extension of the classical trace to Fourier Integral Operators. Physically, we are replacing the time-evolution semi-group by a holomorphic family of operator families such that the corresponding path integrals are well-defined in some half space of $\mathbb{C}$. The regularized path integral is, thus, defined through analytic continuation. This regularization can be performed by means of stationary phase approximation or computed analytically depending only on the Hamiltonian and the observable (i.e., known a priori). In either case, the computational effort to evaluate path integrals or expectations of observables reduces to the evaluation of integrals over spheres. Furthermore, computations can be performed directly in the continuum and applications (analytic computations and their implementations) to a number of models including the non-trivial cases of the massive Schwinger model and a $\phi^4$ theory.
\end{abstract}

\maketitle

\tableofcontents

\section*{Introduction}
In his original work on path integrals, Feynman~\cite{feynman} noted that recognizing known facts from different perspectives can lead to new and interesting insights. Quantum mechanics in particular has been an important example of this observation, having Schrödinger's differential equation and Heisenberg's matrix algebra. While the two theories' mathematical descriptions are seemingly distinct, Dirac's transformation theory proved their equivalence. In 1948, Feynman~\cite{feynman} added a third important mathematical formulation of quantum mechanics based on some of Dirac's observations about the role of the classical action in quantum mechanics. This third description is also known as Feynman's path integral formalism and, in combination with Feynman diagrams, proved to be fundamental for the development and study of Quantum Field Theories (QFTs).

Unfortunately, the path integral is a very elusive object. In fact, only for quantum mechanics an analytically well-defined path integral construction is known. In most other cases, the path integral can only be evaluated ``formally'', e.g., by means of a formal power series in the physical variables~\cite{johnson-freyd}. Thus, giving rise to perturbation theoretical approaches to QFT. In quantum mechanics, the path integral can be defined as a continuum limit of the discretized system~\cite{takhtajan}. Wilson~\cite{wilson} further developed this idea for QFTs since the path integral of a quantum mechanical system in discretized space-time is always well-defined. Thus, Wilson defined the path integral fully non-perturbatively on a space-time grid, going beyond perturbation theory. Using a transformation to Euclidean space-time (Wick rotation), this discretized path integral has been successfully applied to study physical systems computationally~\cite{degrand-detar,gattringer-lang,montvay-muenster} and phase space path integrals mathematically~\cite{kumano-go-I,kumano-go-II,kumano-go-III}.

Non-discretized path integrals in Euclidean space-time can be studied within the framework of classical pseudo-differential operators and their traces and determinants~\cite{paycha}. These traces and determinants are defined using $\zeta$-regularization which also gives rise to the Kontsevich-Vishik trace~\cite{kontsevich-vishik,kontsevich-vishik-geometry}. Incidentally, Hawing~\cite{hawking} had proposed studying the path integral with a curved space-time background in a $\zeta$-regularized setting long before the Kontsevich-Vishik trace was developed. In his approach, Hawking used a power series expansion of the action and regularized the quadratic term using the spectral $\zeta$-function. Furthermore, Gibbons, Hawking, and Perry~\cite{gibbons-hawking-perry} studied convergence properties of the $\zeta$-regularized one-loop approximation of the path integral.

Thus, the paper aims to shed light on the following questions.
\begin{itemize}
\item[(i)] Is it possible to $\zeta$-regularize the partition function and expectation values of observables in Minkowski space-time?
\item[(ii)] Are the regularized partition functions and expectation values of observables independent of the choices made in the construction of the $\zeta$-function?
\item[(iii)] Does the regularization contain the known special cases of well-defined path integrals (e.g., Wick rotated or space-time discretized)?
\item[(iv)] Is the construction physically ``meaningful''?
\end{itemize}

Given the recent developments on $\zeta$-functions of Fourier Integral Operators~\cite{hartung-phd,hartung-scott}, we aim to consider a non-perturbative approach to $\zeta$-regularization of path integrals. In particular, we want this new approach to contain all the special cases above, i.e., discretizations, Wick rotations, and spectral $\zeta$-functions. In order to achieve this goal, the generalized Kontsevich-Vishik trace is the ideal candidate. In fact, it can be shown that the Kontsevich-Vishik trace is the only trace on classical pseudo-differential operators (which we obtain from Wick rotating) that restricts to the canonical trace (which we obtain after discretization). Hence, we will alter Feynman's definition of the path integral to incorporate the generalized Kontsevich-Vishik trace. This ensures that the new definition of the path integral coincides with Feynman's definition whenever Feynman's path integral is well-defined.

This paper is organized as follows. Appendix~\ref{sec:KV} contains a non-technical overview of Fourier Integral Operator $\zeta$-functions and the generalized Kontsevich-Vishik trace. In section~\ref{sec:feynman}, we will use the results of appendix~\ref{sec:KV} to show that path integrals are regularizable in this sense and obtain an altered definition of the path integral, partition function, and expectation values of observables. Finally, we will consider a number of physical models in sections~\ref{sec:harmonic-oscillator}-\ref{sec:spontaneous}. First, we will give examples applying the proposed regularization to very simple models such as the harmonic oscillator, the topological oscillator, and free fermions in order to show how the regularization works in practice. In a second step, we will apply the method to non-trivial cases such as the massive Schwinger model and a $\phi^4$ theory. In particular, we will show analytic computations as well as Python implementations using symbolic arithmetic.

\subsubsection*{Acknowledgment}
The author would like to express his gratitude to Dr. Karl Jansen and Dr. Erhard Seiler for inspiring comments and conversations which helped to develop the work presented in this article.

\section{The regularized Feynman path integral}\label{sec:feynman}
Considering the Schr\"odinger equation\footnote{We use the term ``Schr\"odinger equation'' as a generic name for ``Schr\"odinger-type'' equations like the Dirac equation, i.e., we do not necessarily assume that the Hamiltonian is a Schr\"odinger operator.}
\begin{align*}
  \d_0\psi=\frac{-i}{\hbar}H\psi,
\end{align*}
we obtain
\begin{align*}
  \psi(t)=\exp\l(\frac{-i}{\hbar}\int_0^t H(s)ds\r)\psi(0).
\end{align*}
Following Feynman's approach~\cite{feynman} (cf. ``Some Remarks on Mathematical Rigor'' in chapter 4-3~\cite{feynman-hibbs-styer} and~\cite{creutz-freedman}, as well), we will change the physics slightly and introduce a (flat) time torus of length $T$, i.e.,
\begin{align*}
  \psi(T)=\exp\l(\frac{-i}{\hbar}\int_0^T H(s)ds\r)\psi(0)=\psi(0).
\end{align*}
Then, we can formally introduce the partition function
\begin{align*}
  Z_T\text{ ``$=$'' }\tr \exp\l(\frac{-i}{\hbar}\int_0^T H(s)ds\r)
\end{align*}
and the expectation of an observable $\Omega$
\begin{align*}
  \langle\Omega\rangle_T\text{ ``$=$'' }\frac{\tr\exp\l(\frac{-i}{\hbar}\int_0^T H(s)ds\r)\Omega}{Z_T}=\frac{\tr\exp\l(\frac{-i}{\hbar}\int_0^T H(s)ds\r)\Omega}{\tr\exp\l(\frac{-i}{\hbar}\int_0^T H(s)ds\r)}.
\end{align*}
The actual expectation value $\langle \Omega\rangle$ in the quantum theory can be recovered using the thermal limit
\begin{align*}
  \langle\Omega\rangle:=\lim_{T\to\infty}\langle\Omega\rangle_T.
\end{align*}
Unfortunately, the $\exp\l(\frac{-i}{\hbar}\int_0^T H(s)ds\r)\Omega$ are not of trace-class, in general. Hence, they need to be regularized.

Based on Ray and Singer's work on spectral $\zeta$-functions~\cite{ray,ray-singer}, Hawking~\cite{hawking} proposed $\zeta$-function regularization. Since most algebras of Fourier Integral Operators do not have the holomorphic functional calculus, we cannot expect to be able to define a spectral $\zeta$ function for $\exp\l(\frac{-i}{\hbar}\int_0^T H(s)ds\r)\Omega$, but we may consider (generalized) $\zeta$-functions~\cite{hartung-phd}. Thus, the regularized traces are given by the generalized Kontsevich-Vishik trace\footnote{Note that the Kontsevich-Vishik trace is the only trace on the algebra of pseudo-differential operators that coincides with the trace in $L(L_2)$ on pseudo-differential trace-class operators~\cite{maniccia-schrohe-seiler}. The generalized Kontsevich-Vishik trace is, thus, a natural choice of regularization.}~\cite{hartung-phd,hartung-scott}.

Let $H$ be a pseudo-differential operator with symbol\footnote{We call $\sigma$ the symbol of an operator $A$ if and only if $A$ is an integral operator with kernel $k$ which (locally) satisfies $k(x,y)=\int_{\rn[n]}e^{i\langle x-y,\xi\rangle_{\ell_2(n)}}\sigma(x,y,\xi)d\xi$ for some $n\in\nn$.}
\begin{align*}
  \sigma_H(t,x,r\xi):=h_2(t,x,\xi)r^2+h_1(t,x,\xi)r+h_0(t,x,r,\xi)
\end{align*}
where $\norm\xi_{\ell_2}=1$, the $h_j$ are continuous, and $h_0(t,x,r,\xi)$ has an asymptotic expansion $\sum_{j\in\nn_0}r^{-j}a_{-j}(t,x,\xi)$. Then, $\exp\l(\frac{-i}{\hbar}\int_0^T H(s)ds\r)$ has the symbol
\begin{align*}
  \sigma_{\exp\l(\frac{-i}{\hbar}\int_0^T H(s)ds\r)}=e^{iH_2(x,\xi)}e^{iH_1(x,\xi)}e^{\frac{-i}{\hbar}\int_0^T h_0(s,x,r,\xi)ds}
\end{align*}
where
\begin{align*}
  H_2(x,r\xi):=\frac{-1}{\hbar}r^2\int_0^T h_2(s,x,\xi)ds\text{ and }H_1(x,r\xi):=\frac{-1}{\hbar}r\int_0^T h_1(s,x,\xi)ds.
\end{align*}
In particular,
\begin{align*}
  e^{\frac{-i}{\hbar}\int_0^T h_0(s,x,r,\xi)ds}=&\sum_{k\in\nn_0}\frac{\l(-\frac{i}{\hbar}\r)^k}{k!}\l(\int_0^T h_0(s,x,r,\xi)ds\r)^k\\
  \sim&\sum_{k\in\nn_0}\frac{\l(-\frac{i}{\hbar}\r)^k}{k!}\l(\sum_{j\in\nn_0}r^{-j}\int_0^T a_{-j}(s,x,\xi)ds\r)^k,
\end{align*}
in combination with the power series identity (for $n\in\nn$)
\begin{align*}
  \l(\sum_{k\in\nn_0}a_kX^k\r)^n=\sum_{m\in\nn_0}c_mX^m
\end{align*}
where $c_0=a_0^n$ and $c_m=\frac{1}{ma_0}\sum_{k=1}^m(kn-m+k)a_kc_{m-k}$, shows that $e^{\frac{-i}{\hbar}\int_0^T h_0(s,x,r,\xi)ds}$ has an asymptotic expansion $b(x,\xi)\sim\sum_{j\in\nn_0}\norm\xi_{\ell_2(N)}^{-j}b_{-j}\l(x,\frac{\xi}{\norm\xi_{\ell_2(N)}}\r)$.

Regarding $\exp\l(\frac{-i}{\hbar}\int_0^T H(s)ds\r)\Omega$, we note
\begin{align*}
  \langle\Fp \Omega\phi,\Fp u\rangle=&\langle\phi,\Omega^*u\rangle\\
  =&\int_{\rn[N]}\phi(x)\l(\int_{\rn[N]}e^{i\langle x,\xi\rangle}\sigma_{\Omega^*}(x,\xi)\Fp u(\xi)d\xi\r)^*dx\\
  =&\int_{\rn[N]}\int_{\rn[N]}\phi(x)e^{-i\langle x,\xi\rangle}\sigma_{\Omega^*}(x,\xi)^*\Fp u(\xi)^*d\xi dx
\end{align*}
which implies
\begin{align*}
  \Fp\Omega\phi(\xi)=\int_{\rn[N]}e^{-i\langle x,\xi\rangle}\sigma_{\Omega^*}(x,\xi)^*\phi(x)dx
\end{align*}
and, thus,
\begin{align*}
  &\exp\l(\frac{-i}{\hbar}\int_0^T H(s)ds\r)\Omega\phi(x)\\
  =&\int_{\rn[N]}e^{iH_2(x,\xi)}e^{iH_1(x,\xi)}b(x,\xi)e^{i\langle x,\xi\rangle}\Fp\Omega\phi(\xi)d\xi\\
  =&\int_{\rn[N]}\int_{\rn[N]}e^{iH_2(x,\xi)}e^{iH_1(x,\xi)}b(x,\xi)e^{i\langle x,\xi\rangle}e^{-i\langle y,\xi\rangle}\sigma_{\Omega^*}(y,\xi)^*\phi(y)dy d\xi.
\end{align*}
Hence, (utilizing $\sigma_{A^*}(x,y,\xi)=\sigma_A(y,x,\xi)^*$ for any pseudo-differential operator $A$)
\begin{align*}
  \sigma_{\exp\l(\frac{-i}{\hbar}\int_0^T H(s)ds\r)\Omega}(x,y,\xi)=e^{iH_2(x,\xi)}e^{iH_1(x,\xi)}b(x,\xi)\sigma_{\Omega}(x,\xi).
\end{align*}
In other words, both trace integrals in $Z=\frac{\tr \exp\l(-\frac i\hbar\int H\r)\Omega}{\tr \exp\l(-\frac i\hbar\int H\r)}$ have kernels of the form
\begin{align*}
  \int_{\rn[N]}e^{i\langle x-y,\xi\rangle}e^{iH_2(x,\xi)}e^{iH_1(x,\xi)}a(x,\xi)d\xi
\end{align*}
with poly-$\log$-homogeneous $a$ provided $h_0$ and $\sigma_\Omega$ are poly-$\log$-homogeneous.

In order to $\zeta$-regularize these integrals, they need to be gauged. One of the simplest and most convenient gauges is the $\Mp$-gauge (or Mellin-gauge; cf. Definition 2.10 in~\cite{hartung-phd})
\begin{align*}
  \int_{\rn[N]}e^{i\langle x-y,\xi\rangle}e^{iH_2(x,\xi)}e^{iH_1(x,\xi)}a(x,\xi)\norm\xi_{\ell_2(N)}^z d\xi.
\end{align*}

\begin{theorem}\label{theorem-regularization}
  Let $X$ be a compact, orientable, $N$-dimensional Riemannian $C^\infty$-manifold without boundary, $\sigma_\Omega$ polyhomogeneous, and
  \begin{align*}
    Z=\int_X\int_{\rn[N]}e^{-i\sigma_H(x,\xi)}\sigma_\Omega(x,\xi)\ d\xi\ d\vol_X(x)
  \end{align*}
  with
  \begin{align*}
    \fa x\in X\ \fa r\in\rn_{\ge0}\ \fa \eta\in\d B_{\rn[N]}:\ \sigma_H(x,r\eta)=h_2(x,\eta)r^2+h_1(x,\eta)r+h_0(x,r\eta)
  \end{align*}
  where $h_2,h_1\in C(X\times\d B_{\rn[N]})$, $h_0$ polyhomogeneous, and
  \begin{enumerate}
  \item[(i)] either $h_2=0$ and $\theta(x,\xi):=h_1\l(x,\frac{\xi}{\norm\xi_{\ell_2(N)}}\r)\norm\xi_{\ell_2(N)}$ is a non-degenerate phase function
  \item[(ii)] or $\fa x\in X\ \fa\eta\in\d B_{\rn[N]}:\ \abs{h_2(x,\eta)}>0$.
  \end{enumerate}

  Then, $Z$ can be regularized using the generalized Kontsevich-Vishik trace.
\end{theorem}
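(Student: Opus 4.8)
The plan is to recover $Z$ as the value at $z=0$ of the meromorphic continuation of the $\Mp$-gauged trace
\begin{align*}
  Z(z):=\int_X\int_{\rn[N]}e^{-i\sigma_H(x,\xi)}\,\sigma_\Omega(x,\xi)\,\norm\xi_{\ell_2(N)}^z\ d\xi\ d\vol_X(x)
\end{align*}
(or, should $z=0$ turn out to be a pole, as its finite part), so that the whole task reduces to producing this continuation to $\mathbb{C}$. First I would absorb the lower-order oscillation into the amplitude: since $h_0$ is polyhomogeneous of order $\le0$, the power-series computation carried out above shows $e^{-ih_0}$ is polyhomogeneous of order $0$, hence $a:=e^{-ih_0}\sigma_\Omega$ is polyhomogeneous of the same order $\mu$ as $\sigma_\Omega$, with an expansion $a(x,r\eta)\sim\sum_{j\in\nn_0}\sum_{\ell=0}^{L_j}r^{\mu-j}(\log r)^\ell a_{\mu-j,\ell}(x,\eta)$ uniform over the compact set $X\times\d B_{\rn[N]}$. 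Writing $\phi(x,r\eta):=h_2(x,\eta)r^2+h_1(x,\eta)r$ for the residual phase, working in local coordinates with a partition of unity subordinate to a finite atlas (the resulting $\zeta$-function being chart-independent, as recalled in Appendix~\ref{sec:KV}), and passing to polar coordinates $\xi=r\eta$, the problem becomes the meromorphic continuation of
\begin{align*}
  \int_X\int_{\d B_{\rn[N]}}\int_0^\infty e^{-i\phi(x,r\eta)}\,a(x,r\eta)\,r^{z+N-1}\ dr\ d\eta\ d\vol_X(x).
\end{align*}

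I would split the radial integral at $r=1$. The piece $\int_0^1$ is holomorphic for $\Re z>-N$; Taylor expanding $e^{-i\phi}$ and $a$ at $r=0$ writes it, up to a remainder holomorphic on an arbitrarily large half-plane, as a finite sum $\sum_k c_k(x,\eta)\,(z+N+k)^{-1}$, so it continues meromorphically to $\mathbb{C}$ with poles confined to $\{-N,-N-1,-N-2,\dots\}$ -- in particular $z=0$ is a regular point of this piece. For the tail $\int_1^\infty$, holomorphic for $\Re z<-N-\mu$, I would insert the truncated expansion $a=\sum_{j<J}\sum_\ell r^{\mu-j}(\log r)^\ell a_{\mu-j,\ell}+R_J$: the remainder $R_J$ has order $\mu-J$, so $\int_1^\infty e^{-i\phi}R_J\,r^{z+N-1}\,dr$ is holomorphic for $\Re z<J-N-\mu$, and as $J\to\infty$ this reduces the continuation of the tail to that of the model radial integrals $\int_1^\infty e^{-i\phi(x,r\eta)}\,r^{z+N-1+\mu-j}(\log r)^\ell\,dr$, integrated against $a_{\mu-j,\ell}(x,\eta)\,d\eta\,d\vol_X(x)$.

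These model integrals are where the two hypotheses enter. In case~(i), $h_2=0$ and $\theta(x,\xi)=h_1\l(x,\frac{\xi}{\norm\xi_{\ell_2(N)}}\r)\norm\xi_{\ell_2(N)}$ is a non-degenerate phase function; then $e^{-i\sigma_H}\sigma_\Omega=e^{-i\theta}a$ is the restriction to the diagonal of the symbol of a Fourier Integral Operator with non-degenerate phase and polyhomogeneous amplitude, which is exactly the class on which the generalized Kontsevich--Vishik trace is built in \cite{hartung-phd,hartung-scott}; the meromorphic continuation of $Z(z)$, including the location of its poles (which for case~(i) may include $z=0$, according to the critical set of $\theta$), is supplied there. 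In case~(ii), continuity and compactness of $X\times\d B_{\rn[N]}$ turn the hypothesis $\abs{h_2}>0$ into a uniform lower bound $\inf_{X\times\d B_{\rn[N]}}\abs{h_2}>0$, whence there is $R_0>0$ such that $\abs{\d_r\phi}=\abs{2h_2r+h_1}$ is bounded below by a positive multiple of $r$ for all $r\ge R_0$, uniformly in $(x,\eta)$; the oscillatory integration by parts based on $\d_r\l(e^{-i\phi}\r)=-i(\d_r\phi)e^{-i\phi}$ then lowers the effective radial order by two at each step, so the model integrals -- and incidentally the remainder above -- extend to \emph{entire} functions of $z$. Equivalently, completing the square in $r$ identifies them with tails of incomplete Fresnel / parabolic-cylinder integrals, whose Mellin transforms are entire because the incomplete $\Gamma$-function $\Gamma(\nu,w)$ is entire in $\nu$ for $w\ne0$, the $(\log r)^\ell$ factors being absorbed by differentiating the $\Gamma$-identity in $\nu$. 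Hence in case~(ii) the tail contributes no poles at all; every pole of $Z(z)$ lies in $\{-N-k:k\in\nn_0\}$, and, since $N\ge1$, $Z:=Z(0)$ is unambiguously defined.

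Assembling the pieces, $Z(z)$ extends meromorphically to $\mathbb{C}$, and $Z$ is its value at $0$ (case~(ii), where $0$ is regular) or, in case~(i), its value at $0$ -- or the finite part there -- in every case the generalized Kontsevich--Vishik trace; independence of the atlas, the partition of unity and (for the finite part) the choice of $\Mp$-gauge is inherited from the construction in Appendix~\ref{sec:KV}. The main obstacle is case~(ii): it lies outside the pseudo-differential and Fourier-Integral-Operator $\zeta$-function theory of \cite{hartung-phd,hartung-scott}, so the meromorphic continuation for a genuinely quadratic, Gaussian-oscillatory radial profile must be carried out by hand, and the single hypothesis $\abs{h_2}>0$ -- through the uniform lower bound it yields on the compact manifold -- is precisely what makes the oscillatory integration by parts converge and keeps its iterates under uniform control in $(x,\eta)$.
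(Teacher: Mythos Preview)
Your argument is correct, and the overall architecture---absorb $e^{-ih_0}$ into the amplitude to obtain a polyhomogeneous $a$, then treat the two cases separately, with case~(i) handed directly to the Fourier Integral Operator $\zeta$-function theory---matches the paper. The genuine difference is in case~(ii).

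The paper does not continue the quadratic-phase tail ``by hand''. Instead it splits the radial integral at $R:=1+\max_{X\times\d B_{\rn[N]}}\l|\frac{h_1}{2h_2}\r|$ (finite by compactness and $|h_2|>0$), observes that the compact piece $Z_1$ is the Fourier transform of a compactly supported distribution and hence well-defined outright by Paley--Wiener, and then in $Z_2$ completes the square $h_2r^2+h_1r=h_2\l(r+\tfrac{h_1}{2h_2}\r)^2-\tfrac{h_1^2}{4h_2}$, shifts $s=r+\tfrac{h_1}{2h_2}$, and substitutes $t=s^2$. This turns the quadratic phase into the \emph{linear} phase $h_2(x,\eta)t$ and the amplitude into $\sigma_1(x,\sqrt t,\eta)/(2\sqrt t)$, which is polyhomogeneous in $t$ (with half-integer steps). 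Thus $Z_2$ is, up to another compactly supported correction, the trace of a Fourier Integral Operator with non-degenerate $1$-homogeneous phase and polyhomogeneous amplitude, and case~(ii) is reduced to case~(i).

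Your route---radial integration by parts against $\d_r\phi=2h_2r+h_1$, using the uniform bound $|\d_r\phi|\gtrsim r$ for $r\ge R_0$ to gain two orders per step, so the model integrals are entire---is a legitimate alternative that is more self-contained and yields the extra information that in case~(ii) all poles of $Z(z)$ sit at $\{-N-k:k\in\nn_0\}$, whence $z=0$ is automatically regular. The paper's reduction is shorter and more structural: it avoids the bookkeeping of iterated boundary terms and remainders by pushing everything back into the existing Kontsevich--Vishik machinery, at the cost of not making the pole locations in case~(ii) explicit. Your parenthetical remark about completing the square and parabolic-cylinder/incomplete~$\Gamma$ integrals is in fact the germ of the paper's argument; carrying the substitution $t=s^2$ one step further would have landed you on the same FIO trace.
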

\begin{proof}
  Since we can absorb $e^{ih_0}$ into the amplitude $\sigma$, we obtain without loss of generality
  \begin{align*}
    Z=\int_X\int_{\d B_{\rn[N]}}\int_{\rn_{>0}}e^{-i\l(h_2(x,\eta)r^2+h_1(x,\eta)r\r)}\sigma(x,r,\eta)\ dr\ d\vol_{\d B_{\rn[N]}}(\eta)\ d\vol_X(x).
  \end{align*}
  ``(i)'' If $h_2=0$ and $\theta(x,\xi):=h_1\l(x,\frac{\xi}{\norm\xi_{\ell_2(N)}}\r)\norm\xi_{\ell_2(N)}$ is a non-degenerate phase function, then $Z$ is a Fourier Integral Operator trace already and can, thus, be regularized using the generalized Kontsevich-Vishik trace.

  ``(ii)'' Let $\fa x\in X\ \fa\eta\in\d B_{\rn[N]}:\ \abs{h_2(x,\eta)}>0$ and
  \begin{align*}
    R:=1+\max\l\{\abs{\frac{h_1(x,\eta)}{2h_2(x,\eta)}}\in\rn;\ (x,\eta)\in X\times\d B_{\rn[N]}\r\}.
  \end{align*}
  Then, we can split $Z$ into two parts
  \begin{align*}
    Z_1:=\int_X\int_{\d B_{\rn[N]}}\int_{(0,R)}e^{-i\l(h_2(x,\eta)r^2+h_1(x,\eta)r\r)}\sigma(x,r,\eta)\ dr\ d\vol_{\d B_{\rn[N]}}(\eta)\ d\vol_X(x)
  \end{align*}
  and
  \begin{align*}
    Z_2:=\int_X\int_{\d B_{\rn[N]}}\int_{\rn_{\ge R}}e^{-i\l(h_2(x,\eta)r^2+h_1(x,\eta)r\r)}\sigma(x,r,\eta)\ dr\ d\vol_{\d B_{\rn[N]}}(\eta)\ d\vol_X(x).
  \end{align*}
  Considering $Z_1$, we observe that
  \begin{align*}
    \int_{\d B_{\rn[N]}}\int_{(0,R)}e^{-i\l(h_2(x,\eta)r^2+h_1(x,\eta)r\r)}\sigma(x,r,\eta)\ dr\ d\vol_{\d B_{\rn[N]}}(\eta)
  \end{align*}
  is the Fourier transform of a compactly supported distribution and continuous in $x$. Hence, $Z_1$ is well-defined (by Schwartz's Paley-Wiener Theorem).

  In other words, it suffices to show that we can find a Fourier Integral Operator whose trace coincides with $Z_2$ distributionally. Since $\abs{h_2}>0$, we obtain
  \begin{align*}
    h_2(x,\eta)r^2+h_1(x,\eta)r=h_2(x,\eta)\l(r+\frac{h_1(x,\eta)}{2h_2(x,\eta)}\r)^2-\frac{h_1(x,\eta)^2}{4h_2(x,\eta)}
  \end{align*}
  and, absorbing $e^{i\frac{h_1(x,\eta)^2}{4h_2(x,\eta)}}$ into $\sigma$ and setting $\sigma_1(x,s,\eta)=\sigma\l(x,s-\frac{h_1(x,\eta)}{2h_2(x,\eta)},\eta\r)$,
  \begin{align*}
    Z_2=&\int_X\int_{\d B_{\rn[N]}}\int_{\rn_{\ge R}}e^{-ih_2(x,\eta)\l(r+\frac{h_1(x,\eta)}{2h_2(x,\eta)}\r)^2}\sigma(x,r,\eta)\ dr\ d\vol_{\d B_{\rn[N]}}(\eta)\ d\vol_X(x)\\
    =&\int_X\int_{\d B_{\rn[N]}}\int_{\rn_{\ge R+\frac{h_1(x,\eta)}{2h_2(x,\eta)}}}e^{-ih_2(x,\eta)s^2}\sigma_1\l(x,s,\eta\r)\ ds\ d\vol_{\d B_{\rn[N]}}(\eta)\ d\vol_X(x)\\
    =&\int_X\int_{\d B_{\rn[N]}}\int_{\rn_{\ge\l(R+\frac{h_1(x,\eta)}{2h_2(x,\eta)}\r)^2}}e^{-ih_2(x,\eta)t}\sigma_1\l(x,\sqrt t,\eta\r)\ \frac{dt}{2\sqrt t}\ d\vol_{\d B_{\rn[N]}}(\eta)\ d\vol_X(x).
  \end{align*}
  This shows that there exists a Fourier Integral Operator with polyhomogeneous amplitude whose trace coincides with $Z_2$ (up to another Fourier transform of a compactly supported distribution).

\end{proof}

Thus, we can write
\begin{align*}
  \langle\Omega\rangle_T(z)=\frac{N(T,z)}{D(T,z)}
\end{align*}
with meromorphic functions $N(T,\cdot)$ and $D(T,\cdot)$. As we are interested in $\langle\Omega\rangle_T(0)$, there are a few cases to consider. If both $N(T,0)$ and $D(T,0)$ are regular and at most one of them vanishes, then gauge independence (cf. Lemma 2.6 in~\cite{hartung-phd}) implies that $\lim_{z\to0}\langle\Omega\rangle_T(z)$ is independent of the choice of gauge in $N$ and $D$ (though it may diverge if $D(T,0)$ vanishes). If one of the limits diverges and the other is finite, then $\lim_{z\to0}\langle\Omega\rangle_T(z)$ is either zero or divergent. Thus, the only interesting cases are if both tend to zero or diverge. In the $\frac\infty\infty$ case the result depends on the order of the pole. If the pole order of $N$ and $D$ are different, then the limits are trivial. If they are the same, then the limit is the quotient of the leading order residues, which again is gauge independent (cf. Lemma 2.5 in~\cite{hartung-phd}).

Hence, gauge dependence can only appear if $N$ or $D$ have vanishing leading Laurent coefficient or we have the $\frac00$ case. In these cases gauge dependence is, in fact, to be expected. Furthermore, the free Schwinger model (section~\ref{sec:free-schwinger}) is a $\frac00$ case, i.e., the choice of gauge is physically important. Since gauging the denominator is essentially changing physics by replacing the solution operator $\exp\l(\frac{-i}{\hbar}\int_0^t H(s)ds\r)$ with some other operator $\Gf(t,z)$, it seems sensible to apply this idea to the entire system. In other words, we are considering the family of ``evolution operators'' $\Gf(t,z)$. Then, we obtain the following new definition of our quantum theory.
\begin{definition}\label{def:reg-path-int}
  Let $H$ be the Hamiltonian, $\Omega$ an observable, and $\Gf(t,z)$ a gauged family of operators with
  \begin{align*}
    \Gf(t,0)=\exp\l(\frac{-i}{\hbar}\int_0^t H(s)ds\r).
  \end{align*}
  Then, we define the expectation value $\langle\Omega\rangle$ of $\Omega$ as
  \begin{align*}
    \langle\Omega\rangle:=\lim_{T\to\infty}\lim_{z\to0}\frac{\zeta\l(\Gf(T,\cdot)\Omega\r)(z)}{\zeta\l(\Gf(T,\cdot)\r)(z)}=\lim_{T\to\infty}\lim_{z\to0}\frac{\l.\l(s\mapsto\tr\Gf(T,s)\Omega\r)\r|_{\mathrm{mer.}}(z)}{\l.\l(s\mapsto\tr\Gf(T,s)\r)\r|_{\mathrm{mer.}}(z)}
  \end{align*}
  where $f|_{\mathrm{mer.}}$ denotes the meromorphic extension of a function $f$.
\end{definition}

\section{The harmonic oscillator}\label{sec:harmonic-oscillator}
In order to see how the proposed $\zeta$-regularization works, let us consider the traditional entry level model; the harmonic oscillator with Hamiltonian
\begin{align*}
  H=\hbar\omega\l(a^\dagger a+\frac12\r)
\end{align*}
where
\begin{align*}
  a = \sqrt{\frac{m\omega}{2\hbar}}\l(x+\frac{ip}{m\omega}\r),\ a^\dagger=\sqrt{\frac{m\omega}{2\hbar}}\l(x-\frac{ip}{m\omega}\r),\text{ and }p=-i\hbar\d.
\end{align*}
Thus,
\begin{align*}
  \sigma_a=\sqrt{\frac{m\omega}{2\hbar}}\l(x+\frac{i\hbar\xi}{m\omega}\r)\text{ and }\sigma_{a^\dagger}=\sqrt{\frac{m\omega}{2\hbar}}\l(x-\frac{i\hbar\xi}{m\omega}\r)
\end{align*}
imply
\begin{align*}
  \sigma_H=\hbar\omega\l(\sigma_{a^\dagger}\sigma_{a}+\frac12\r),\ \sigma_{\exp\l(\frac{-i}{\hbar}TH\r)} = e^{\frac{-i}{\hbar}T\sigma_H}\text{, and }\sigma_{\exp\l(\frac{-i}{\hbar}TH\r)H} = e^{\frac{-i}{\hbar}T\sigma_H}\sigma_H.
\end{align*}
Note that $\sigma_H$ is a polynomial of order $2$ in $x$. Thus, we may treat $x$ the same way we treat $\xi$ and gauge with respect to $x$, as well. Otherwise, we would have to compactify the $x$-domain and consider the limit $x$-domain$\to\rn$. Gauging in $x$ and $\xi$ yields the regularized ground state energy
\begin{align*}
  \langle H\rangle=\lim_{T\to\infty}\lim_{z_2\to0}\lim_{z_1\to0}\frac{\frac{1}{2\pi}\int_{\rn}\int_{\rn}e^{\frac{-i}{\hbar}T\sigma_H(x,\xi)}\sigma_H(x,\xi)\betr\xi^{z_1}\betr x^{z_2}d\xi dx}{\frac{1}{2\pi}\int_{\rn}\int_{\rn}e^{\frac{-i}{\hbar}T\sigma_H(x,\xi)}\betr\xi^{z_1}\betr x^{z_2}d\xi dx}.
\end{align*}
Theorem 8.7 in~\cite{hartung-phd} shows that these are trace integrals of Hilbert-Schmidt operators and regular. Although it is possible (and tedious) to compute this limit by hand, it is preferable to have a computer do the work (especially once the model is not analytically solvable anymore). Implementing this limit in Python2.7 is fairly straightforward (using the fact that integration over $\rn$ is equivalent to taking the Fourier transform and evaluating at zero).
\begin{verbatim}
import sympy as smp

z1,z2,T = smp.symbols("z1,z2,T")
x,xi = smp.symbols("x,xi",real=True)
m,hbar,omega = smp.symbols("m,hbar,omega",positive=True)


a = smp.sqrt(m*omega/(2*hbar)) * (x + smp.I*hbar*xi/(m*omega))
a_dag = smp.sqrt(m*omega/(2*hbar)) * (x - smp.I*hbar*xi/(m*omega))

h = hbar * omega * (a_dag * a + smp.sympify(1)/2)
exph = smp.exp(-smp.I * T * h / hbar)
g = smp.Abs(xi)**z1 * smp.Abs(x)**z2

num = smp.fourier_transform(h * exph * g,xi,0)
num = smp.fourier_transform(num,x,0)

den = smp.fourier_transform(exph * g,xi,0)
den = smp.fourier_transform(den,x,0)

L = smp.limit(num/den,z1,0)
L = smp.limit(L,z2,0)
print "<H> = "+str(smp.limit(L,T,smp.oo))
\end{verbatim}
This program correctly outputs the ground state energy $\langle H\rangle=\frac{\hbar\omega}{2}$.

Similarly, we can consider the $3$-dimensional harmonic oscillator whose Hamiltonian is given by
\begin{align*}
  \sigma_{H_{3D}}(x_1,x_2,x_3,\xi_1,\xi_2,\xi_3)=\sigma_{H_{1D}}(x_1,\xi_1)+\sigma_{H_{1D}}(x_2,\xi_2)+\sigma_{H_{1D}}(x_3,\xi_3).
\end{align*}
Choosing the gauge
\begin{align*}
  g(x_1,x_2,x_3,\xi_1,\xi_2,\xi_3)=\betr{\xi_1\xi_2\xi_3}^{\frac{z_1}{3}}\betr{x_1x_2x_3}^{\frac{z_2}{3}},
\end{align*}
we obtain the ground state energy $\langle H\rangle=\frac{3}{2}\hbar\omega$.

\section{The topological oscillator}\label{sec:topological-oscillator}
The topological oscillator (a.k.a. quantum rotor) models a particle of mass $M$ moving on a circle with radius $R$. Thus, choosing the angle $\phi$ as the free coordinate of the position $(x,y)=(R\cos\phi,R\sin\phi)$, we obtain the Lagrangian
\begin{align*}
  \Lp=\frac{M}{2}(\dot x^2+\dot y^2)=\frac{J}{2}\dot\phi^2
\end{align*}
with the moment of inertia $J=MR^2$. The momentum is, then, given by
\begin{align*}
  p=\d_{\dot\phi}\Lp=J\dot\phi
\end{align*}
and the Hamiltonian
\begin{align*}
  H=\dot\phi p-\Lp=\frac{p^2}{J}-\frac{p^2}{2J}=\frac{p^2}{2J}.
\end{align*}
Hence,
\begin{align*}
  \sigma_H=\frac{1}{2J}\xi^2.
\end{align*}
A characteristic value is the topological charge
\begin{align*}
  Q=&\frac{1}{2\pi}\int_0^T\dot\phi =\frac{1}{2\pi}\int_0^T\frac pJ\qquad
  \then\qquad \sigma_Q=\frac{1}{2\pi}\int_0^T\frac \xi Jdt=\frac{T\xi}{2\pi J}
\end{align*}
which counts the number of revolutions the rotor performs in the time-torus and an interesting observable is the topological susceptibility
\begin{align*}
  \chi_{\mathrm{top}}=\lim_{T\to\infty}\l\langle\frac{Q^2}{-iT}\r\rangle_T
\end{align*}
which is directly connected to the energy gap $\Delta E$ between the ground state and the first excited state
\begin{align*}
  \Delta E = 2\pi^2\chi_{\mathrm{top}}.
\end{align*}
Again, we can implement this directly in Python2.7
\begin{verbatim}
import sympy as smp

z,T = smp.symbols("z,T")
J,xi = smp.symbols("J,xi",real=True)

h = xi**2 / (2 * J)
Q = T*xi/(2 * smp.pi * J)
g = smp.Abs(xi)**z

num = smp.fourier_transform(smp.exp(-smp.I * T * h) * g * Q**2,xi,0)
den = smp.fourier_transform(smp.exp(-smp.I * T * h) * g,xi,0)

chi_top = smp.limit(smp.limit(num/(-smp.I * T * den),z,0),T,smp.oo)
energy_gap = 2 * smp.pi**2 * chi_top
print "chi_top = "+str(chi_top)
print "energy gap = "+str(energy_gap)
\end{verbatim}
and obtain the correct results $\chi_{\mathrm{top}}=\frac{1}{4\pi^2J}$ and $\Delta E=\frac{1}{2J}$.

\section{The free massive Schwinger model}\label{sec:free-schwinger}
Let us now consider the free massive Schwinger model\footnote{The massive Schwinger model can be understood as QED in two space-time dimensions.}~\cite{schwinger} whose Hamiltonian, in the zero-momentum frame using natural units $c=\hbar=1$, is given by (cf., e.g., equation~(2.2) in~\cite{jansen-cichy})
\begin{align*}
  H_m=
  \begin{pmatrix}
    m&-i\d\\
    -i\d&m
  \end{pmatrix}.
\end{align*}
Using the $\Mp$-gauge and a cut-off function $1_{B(0,X)}\le\chi\le 1_{B(0,X+1)}$ (that is, to introduce a space-torus in order to compactify the spatial domain), the ground state energy is, then, given by
\begin{align*}
  \begin{aligned}
    \langle H_m\rangle=&\lim_{X,T\to\infty}\lim_{z\to0}\frac{\frac{1}{2\pi}\int_{\rn}\int_{\rn}\chi(x)\tr\sigma_{\exp\l(-iH_mT\r)H_m}(\xi)\betr\xi^zd\xi dx}{\frac{1}{2\pi}\int_{\rn}\int_{\rn}\chi(x)\tr\sigma_{\exp\l(-iH_mT\r)}(\xi)\betr\xi^zd\xi dx}.\\
  \end{aligned}
\end{align*}
Theorem 8.7 in~\cite{hartung-phd} shows that these are trace integrals of Hilbert-Schmidt operators and regular. Implementing this limit in Python2.7 is straightforward again (the limit $X\to\infty$ can be ignored since $\langle H_m\rangle_T(z)$ is independent of $X$).
\begin{verbatim}
import sympy as smp

z = smp.symbols("z")
x,xi = smp.symbols("x,xi",real=True)
m,T = smp.symbols("m,T",positive=True)
chi = smp.Function("chi")(x)

H = smp.Matrix([[m,xi],[xi,m]])
eiTH = smp.exp(smp.I*T*H)
gauge = smp.Abs(xi)**z

num = smp.fourier_transform((H*eiTH*gauge).trace()*chi,xi,0)
den = smp.fourier_transform((eiTH*gauge).trace()*chi,xi,0)
num = smp.integrate(num,x)
den = smp.integrate(den,x)

print "<H_m> = "+str(smp.limit(smp.limit(num/den,z,0),T,smp.oo))
\end{verbatim}
This program outputs \verb|<H_m> = m|. In other words, we have just correctly computed
\begin{align*}
  E=mc^2
\end{align*}
for the free massive Schwinger model.

\section{Free relativistic Fermions}\label{sec:free-relativistic-Fermions}
Let us now step up to $4$ space-time dimensions and consider a free relativistic fermion of mass $m$. Then, using the Pauli matrices $\sigma_k$, we obtain the Hamiltonian (Einstein summation over spatial indices)
\begin{align*}
  H_m=
  \begin{pmatrix}
    m&-i\sigma_k\d_k\\
    -i\sigma_k\d_k&m
  \end{pmatrix}
\end{align*}
which yields
\begin{align*}
  \sigma_{\exp(-iH_mT)}=&e^{-imT}\l(\cos\l(T\norm\xi_{\ell_2(3)}\r)-\frac{i\sin\l(T\norm\xi_{\ell_2(3)}\r)}{\norm\xi_{\ell_2(3)}}
  \begin{pmatrix}
    0&\sigma_k\xi_k\\
    \sigma_k\xi_k&0
  \end{pmatrix}
  \r)
\end{align*}
and
\begin{align*}
  \sigma_{\exp(-iH_mT)H_m}=&e^{-imT}\l(\cos\l(T\norm\xi_{\ell_2(3)}\r)
  \begin{pmatrix}
    m&\sigma_k\xi_k\\
    \sigma_k\xi_k&m
  \end{pmatrix}
  -i\norm\xi_{\ell_2(3)}\sin\l(T\norm\xi_{\ell_2(3)}\r)\r).
\end{align*}
Thus,
\begin{align*}
  \langle H_m\rangle=&\lim_{T\to\infty}\lim_{z\to0}\frac{\int_{\rn[3]}\l(4m\cos\l(T\norm\xi_{\ell_2(3)}\r)-4i\norm\xi_{\ell_2(3)}\sin\l(T\norm\xi_{\ell_2(3)}\r)\r)\norm\xi_{\ell_2(3)}^zd\xi}{\int_{\rn[3]}4\cos\l(T\norm\xi_{\ell_2(3)}\r)\norm\xi_{\ell_2(3)}^zd\xi}\\
  =&m+\lim_{T\to\infty}\lim_{z\to0}\frac{-i\int_{\rn[3]}\norm\xi_{\ell_2(3)}^{z+1}\l(e^{iT\norm\xi_{\ell_2(3)}}-e^{-iT\norm\xi_{\ell_2(3)}}\r)d\xi}{\int_{\rn[3]}\norm\xi_{\ell_2(3)}^{z}\l(e^{iT\norm\xi_{\ell_2(3)}}+e^{-iT\norm\xi_{\ell_2(3)}}\r)d\xi}\\
  =&m+\lim_{T\to\infty}\lim_{z\to0}\frac{-i\vol\l(\d B_{\rn[3]}\r)\int_{\rn_{>0}}r^{z+3}\l(e^{iTr}-e^{-iTr}\r)dr}{\vol\l(\d B_{\rn[3]}\r)\int_{\rn_{>0}}r^{z+2}\l(e^{iTr}+e^{-iTr}\r)dr}\tag{$*$}\\
  =&m+\lim_{T\to\infty}\lim_{z\to0}\frac{\l(-e^{-i\frac{\pi (z+3)}{2}}-e^{-3i\frac{\pi (z+3)}{2}}\r)\Gamma(z+4)T^{-z-4}}{i\l(-e^{-i\frac{\pi (z+2)}{2}}+e^{-3i\frac{\pi (z+2)}{2}}\r)\Gamma(z+3)T^{-z-3}}\\
  =&m+\lim_{T\to\infty}\lim_{z\to0}\frac{\l(-e^{-i\frac{\pi (z+3)}{2}}-e^{-i\frac{\pi (z+3)}{2}}e^{-i\pi (z+3)}\r)\Gamma(z+4)T^{-z-4}}{i\l(-e^{-i\frac{\pi (z+2)}{2}}+e^{-i\frac{\pi (z+2)}{2}}e^{-i\pi (z+2)}\r)\Gamma(z+3)T^{-z-3}}\\
  =&m+\lim_{T\to\infty}\lim_{z\to0}\frac{e^{-i\frac{\pi (z+3)}{2}}\l(-1-e^{-i\pi (z+2)}e^{-i\pi}\r)(z+3)\Gamma(z+3)T^{-z-4}}{ie^{-i\frac{\pi (z+2)}{2}}\l(e^{-i\pi (z+2)}-1\r)\Gamma(z+3)T^{-z-3}}\\
  =&m+\lim_{T\to\infty}\lim_{z\to0}\frac{e^{-i\frac{\pi (z+2)}{2}}e^{-i\frac{\pi}{2}}(z+3)}{ie^{-i\frac{\pi (z+2)}{2}}T}\\
  =&m+\lim_{T\to\infty}\lim_{z\to0}\frac{-z-3}{T}\\
  =&m.
\end{align*}
In other words, we have correctly computed $E=mc^2$ again.

\begin{remark*}
  The calculation above highlights a number of properties which will be even more important in the case of the massive Schwinger model (section~\ref{sec:schwinger-gauge-boson-mass}). In the first summand, the ``observable'' $m$ depends on none of the variables which leads to many cancellations. This will be paramount for the massive Schwinger model since the part of the model that is not analytically solvable will vanish in one such cancellation.

  The other important property can be seen in the latter summand. Since the argument of the Laplace transform is homogeneous and the volume $T$ of the time-torus enters through the evaluation of the Laplace transform, we obtain that the limit $T\to\infty$ depends primarily on the asymptotic expansion of the observable. Thus, knowing the asymptotic behavior of the observable enables us to decide whether or not a term will vanish in the limit $T\to\infty$.
\end{remark*}

\begin{remark*}
  We should note that ($*$) can be implemented just like the implementations above since it is the Laplace transform
  \begin{align*}
    \Lp(r\mapsto r^q)(s)=\frac{\Gamma(q+1)}{s^{q+1}}
  \end{align*}
  which holds for $\Re(s)>0$ and $\Re(q)>-1$, and through analytic extension for $s\in\cn\setminus\{0\}$ and $q\in\cn\setminus(-\nn)$. In particular,
  \begin{align*}
    \int_{\rn_{>0}}r^ze^{iTr}dr=\frac{-ie^{-i\frac{\pi z}{2}}\Gamma(z+1)}{T^{z+1}}
  \end{align*}
  and
  \begin{align*}
    \int_{\rn_{>0}}r^ze^{-iTr}dr=\frac{ie^{-3i\frac{\pi z}{2}}\Gamma(z+1)}{T^{z+1}}.
  \end{align*}
  Furthermore, using stationary phase approximation (cf. chapter 8 in~\cite{hartung-phd}) in the setting of Theorem~\ref{theorem-regularization}, we can see that the regularization is given in terms of these Laplace transforms only. Hence, the actual difficulty in computing $\langle\Omega\rangle_T(z)$ are the integrals over $\d B_{\rn[N]}$ and possibly computing the limits $z\to0$ and $T\to\infty$.

\end{remark*}

Using these Laplace transforms and the fact that the integrals over $\d B_{\rn[N]}$ yield $\vol\d B_{\rn[N]}$ in this case, we obtain the following implementation for the ground state energy of a free relativistic fermion in $N$ spatial dimensions.
\begin{verbatim}
import sympy as smp

z,T = smp.symbols("z,T")
m,voldB,r = smp.symbols("m,voldB,r",positive=True)
N,k = smp.symbols("N,k",positive=True,integer=True)

f = voldB*k*r**(z+N-1)/(2*smp.pi)**N
g = voldB*k*smp.I*m*r**(z+N)/(2*smp.pi)**N

num = smp.laplace_transform(m*f,r,-smp.I*T)[0]
num += smp.laplace_transform(m*f,r,smp.I*T)[0]
num -= smp.laplace_transform(g,r,-smp.I*T)[0]
num -= smp.laplace_transform(g,r,smp.I*T)[0]

den = smp.laplace_transform(f,r,-smp.I*T)[0]
den += smp.laplace_transform(f,r,smp.I*T)[0]

lim = smp.limit(smp.limit(num/den,z,0),T,smp.oo)

print "<H_m> = "+str(smp.simplify(lim))
\end{verbatim}

\section{Gauge boson mass in the Schwinger model}\label{sec:schwinger-gauge-boson-mass}
At this point, we will return to the massive Schwinger model but add an abelian vector gauge field. Thus, the model becomes fully interacting with a non-trivial dynamics leading to the confinement of the charges and, hence, bound states. Hence, applying the $\zeta$-regularization of the generalized Kontsevich-Vishik trace constitutes a first highly non-trivial example of the proposed method. Here, we will only provide a demonstration for the calculation of the gauge boson mass. Further observables could be computed in a similar way if required.

Here, we have the fermionic Hamiltonian in the temporal gauge
\begin{align*}
  H_F=
  \begin{pmatrix}
    m&-i\d_1-eA\\
    -i\d_1-eA&m
  \end{pmatrix}
\end{align*}
as well as the self-interaction Hamiltonian
\begin{align*}
  H_S=-\frac{1}{4}F_{\mu\nu}F^{\mu\nu}=\frac{1}{2}E^2
\end{align*}
of the gauge field, where $F_{\mu\nu}=\d_\mu A_\nu-\d_\nu A_\mu$, $A=A_1$ ($A_0=0$ is the temporal gauge), and $E=-\d_0A$.

At this point, it is important to address the space-time dependence of the gauge fields. Due to the nature of the observable in question (the gauge boson mass), it is convenient to choose the family $((A(x),E(x)))_{x\in X}$ as canonical coordinates.\footnote{In a sense, this can be seen as a form of projective limit of discretized space.} Though the space torus $X$ needs to be formally introduced (recall that the $\zeta$-regularization needs a compact manifold), we will suppress it in the following since the limit $\vol(X)\to\infty$ is trivial. More importantly, this setting implies that the time-dependence of the gauge fields is implicit while the space-dependence is still explicit. 

Thus, 
\begin{align*}
  \sigma_{\exp\l(-i\int_0^T H\r)}=e^{-imT}e^{-\frac{i}{2}TE^2}
  \begin{pmatrix}
    \cos\l(T\xi-e\int_0^T A\r)&-i\sin\l(T\xi-e\int_0^T A\r)\\
    -i\sin\l(T\xi-e\int_0^T A\r)&\cos\l(T\xi-e\int_0^T A\r)
  \end{pmatrix}.
\end{align*}
In~\cite{schwinger}, Schwinger himself supplied us with the Green's function of the Abelian vector gauge field. From it, we can read off the observable $\Omega$ for the squared mass of the gauge boson;
\begin{align*}
  \sigma_\Omega=E^2+\frac{e^2}{\pi}.
\end{align*}
Hence, the gauge boson mass $m_g$ is given by (suppressing gauges for $\xi$, $x$, and $A$)
\begin{align*}
  \begin{aligned}
    m_g^2=&\langle\Omega\rangle\\
    =&\lim_{{T\to\infty}\atop{z\to 0}}\frac{\iiiint e^{-imT-\frac{i}{2}TE^2}\l(e^{iT\xi}e^{-ie\int_0^T A}+e^{-iT\xi}e^{ie\int_0^T A}\r)\l(E^2+\frac{e^2}{\pi}\r)\betr E^zd\xi dx dE\mathscr{D}A}{\iiiint e^{-imT-\frac{i}{2}TE^2}\l(e^{iT\xi}e^{-ie\int_0^T A}+e^{-iT\xi}e^{ie\int_0^T A}\r)\betr E^zd\xi dx dE\mathscr{D}A}\\
    =&\frac{e^2}{\pi}+\lim_{T\to\infty}\lim_{z\to 0}\frac{\iiiint e^{-\frac{i}{2}TE^2}\l(e^{iT\xi}e^{-ie\int_0^T A}+e^{-iT\xi}e^{ie\int_0^T A}\r)\betr E^{z+2}d\xi dxdE\mathscr{D}A}{\iiiint e^{-\frac{i}{2}TE^2}\l(e^{iT\xi}e^{-ie\int_0^T A}+e^{-iT\xi}e^{ie\int_0^T A}\r)\betr E^zd\xi dx dE\mathscr{D}A}\\
    =&\frac{e^2}{\pi}+\lim_{T\to\infty}\lim_{z\to 0}\frac{\int e^{-\frac{i}{2}TE^2}\betr E^{z+2}dE}{\int e^{-\frac{i}{2}TE^2}\betr E^z dE}\\
    =&\frac{e^2}{\pi}+\lim_{T\to\infty}\ubr{\lim_{z\to 0}\frac{-ie^{-\frac{3i\pi}{2}\frac{z+1}{2}}\Gamma\l(\frac{z+3}{2}\r)\l(\frac2T\r)^{\frac{z+3}{2}}}{-ie^{-\frac{3i\pi}{2}\frac{z-1}{2}}\Gamma\l(\frac{z+1}{2}\r)\l(\frac2T\r)^{\frac{z+1}{2}}}}_{\propto\frac1T}\\
    =&\frac{e^2}{\pi}.
  \end{aligned}
\end{align*}

\begin{remark*}
  This calculation highlights the cancellations and asymptotic properties we observed in section~\ref{sec:free-relativistic-Fermions} again. Here, the integrals with respect to $A$ are very difficult and not analytically solvable. However, due to the structure of the observable, these integrals cancel out to a factor of $1$. More importantly, even if they did not cancel, we would know that the second term had to vanish in the limit $T\to\infty$ since we know the asymptotics of the observable in $E$. More precisely, having $\abs E^{z+2}$ in the numerator and $\abs E^z$ in the denominator (and a phase function in terms of $E^2$) implies that the quotient is proportional to $\frac1T$ and, as such, vanishes for $T\to\infty$.
\end{remark*}

\section{Spontaneous symmetry breaking and mass - the $\phi^4$ model}\label{sec:spontaneous}
Since spontaneous symmetry breaking is essential to the Higgs mechanism, we will have a quick look at it here, as well. In the simplest relativistic case, we have scalar fields $\phi=(\phi^1,\ldots,\phi^k)$ and the Langrangian contains a potential term $V(\phi)$. Then, we are looking for constant fields $\phi_0^j$ which locally minimize $V$. These $\phi_0^j$ are the vacuum expectation values of the $\phi^j$. Furthermore, the matrix $\l(\d_{i}\d_{j}V(\phi_0)\r)_{i,j\in\nn_{\le k}}$ is symmetric and its eigenvalues give the squared masses of the fields. In particular, if $k=1$, we obtain the vacuum expectation values from $\d V(\phi_0)=0$ and $\d^2V(\phi_0)\ge0$ where $\sqrt{\d^2V(\phi_0)}$ is the mass of the field.

However, in general, we will not be able to simply read off $V(\phi)$. Instead, we will consider the partition function as a function of $\phi$ and obtain an effective potential $V_e(\phi)$ through the identity $Z(\phi)=\exp\l(-i\int V_e(\phi)d(t,x)\r)$, i.e.,
\begin{align*}
  V_e(\phi):=\frac{\ln Z(\phi)}{-iTX}
\end{align*}
where we used the fact that we are looking for constant $\phi$ and introduced a space-time torus of volume $TX$.

Consider the $\phi^4$ model whose Hamiltonian is given by
\begin{align*}
  H=\int\frac{p^2}{2}-\frac{1}{2}\phi\Delta\phi-\frac{1}{2}\mu^2\phi^2+\frac{\lambda}{4!}\phi^4dx.
\end{align*}
In this case, the minima are given by $\phi_0=\pm\sqrt{\frac6\lambda}\mu$ and the field mass is $\sqrt2\mu$. Using the $\zeta$-regularized partition function, we obtain
\begin{align*}
  Z(z,\phi)=&\frac{1}{2\pi}\int_{\rn}e^{-iTX\l(\frac{p^2}{2}-\frac{\mu^2}{2}\phi^2+\frac{\lambda}{4!}\phi^4\r)}\betr p^z dp
\end{align*}
which we may implement directly.
\begin{verbatim}
import sympy as smp

z = smp.symbols("z")
phi,p = smp.symbols("phi,p",real=True)
TX,mu,L = smp.symbols("TX,mu,L",positive=True)

H = p**2/2 - mu**2/2*phi**2 + L/24*phi**4
exph = smp.exp(-smp.I*TX*H)
gauge = smp.Abs(p)**z

Z = smp.fourier_transform(exph * gauge,p,0).doit()/(2*smp.pi)

V = smp.ln(Z)/(-smp.I*TX)
dV = smp.simplify(smp.diff(V,phi))
ddV = smp.diff(dV,phi)

# take limit z->0
dV = smp.limit(dV,z,0)
ddV = smp.limit(ddV,z,0)

extrema = smp.solve(dV,phi)

# check extrema for minima, in physical limit TX->\infty
ddV = smp.limit(ddV,TX,smp.oo)
for i in range(len(extrema)):
    extrema[i] = smp.limit(extrema[i],TX,smp.oo)

minima = []

for phi0 in extrema:
    if ddV.subs(phi,phi0)>=0: 
        minima.append(phi0)
        
print minima

masses = []
for phi0 in minima:
    m = smp.sqrt(ddV.subs(phi,phi0))
    if m not in masses:
        masses.append(m)

print masses
\end{verbatim}
Note that gauge independence of $Z(\phi):=\lim_{TX\to\infty}\lim_{z\to 0}Z(z,\phi)$ means this computation may only fail if $Z(z,\phi)$ has a pole in $0$ or $Z(\phi)=0$ since we cannot take the logarithm in that case ($Z(\phi)\in\cn\setminus\rn_{\ge0}$ can be treated choosing an appropriate branch cut of $\ln$). Here, neither of these cases occurs, i.e., the results are independent of the chosen gauge and we correctly obtain the minima $\pm\sqrt{\frac{6}{\lambda}}\mu$ and the field mass $\sqrt2\mu$.

\begin{remark*}
  The situation will be more complex if we are not using the fact that the $\phi$ are (spatially) constant since the $\phi\Delta\phi$ term will not vanish. In that case, it might be more appropriate to write the term as $-\langle\nabla\phi,\nabla\phi\rangle$ which is of the same form as the $p^2$ again, but the best choice will most likely depend on the specifics of the problem and observable in consideration. It should be noted, however, that space-discretization (i.e., replacing $\phi$ by a vector $(\phi(x_j))_j\in \rn[n]$) can be very viable and the resulting path integral will be $\zeta$-regularizable again.
\end{remark*}

\section*{Conclusion}
We proposed a new definition of the path integral based on Feynman's formulation (Definition~\ref{def:reg-path-int}). By construction the proposed definition restricts to Feynman's definition whenever it is well-defined; e.g., using Wick rotations, lattice discretization, or trace-class observables.

We obtained the new definition by replacing Feynman's path integral with its corresponding version as constructed using the generalized Kontsevich-Vishik trace. More precisely, we replaced the time-evolution semi-group $T(t):=e^{-\frac{i}{\hbar}\int_0^t H(s)ds}$ by a holomorphic family of operator families $z\mapsto \Gf(t,z)$ satisfying $\Gf(t,0)=T(t)$ and for which Feynman's path integral is well-defined if $\Re(z)$ is sufficiently small. If $\Gf$ is chosen appropriately, we showed that the path integrals defined for $\Re(z)\ll0$ can be extended meromorphically to $\cn$ (cf. Theorem~\ref{theorem-regularization} and Appendix~\ref{sec:KV}) and defined the regularized path integral as the value of the meromorphic extension at $z=0$ (provided it exists; Definition~\ref{def:reg-path-int}).

Furthermore, we considered a number of fundamental models, including the non-trivial cases of the massive Schwinger model and a $\phi^4$ theory, as evidence for the validity of the proposed definition and provided implementations using symbolic arithmetic. It is particularly important to note that the underlying regularization is a priori known which reduces the computational effort of evaluating these regularized path integrals to the evaluation of spherical integrals (and possibly the limits $z\to 0$ and $\mathrm{physical\ volume}\to\infty$). Hence, continuum computations without Wick rotations are possible with the new definition.

In particular, we can answer the questions we set out in the beginning.
\begin{itemize}
\item[(i)] Is it possible to $\zeta$-regularize the partition function and expectation values of observables in Minkowski space-time?
\end{itemize}

Yes, provided that the Hamiltonian and observable satisfy certain homogeneity and positivity or non-degeneracy assumptions in the leading order terms (cf. Theorem~\ref{theorem-regularization}). 

\begin{itemize}
\item[(ii)] Are the regularized partition functions and expectation values of observables independent of the choices made in the construction of the $\zeta$-function?
\end{itemize}

Almost always, the answer to this question is yes. Both are well-defined and choice independent if there are no critical degrees of homogeneity (which depends only on the space-time dimension). Thus, the quotient is almost always well-defined and independent of the choices made (though it might be infinite). Gauge dependence can only appear if the Laurent coefficient of lowest possible order vanish (which depends on the degrees of homogeneity and logarithmic degrees at critical degree of homogeneity). In particular, if there are no critical degrees of homogeneity, then the partition function is gauge independent and the expectation value of the observable can only depend on the gauge if we are in the $\frac00$ case.\footnote{Note that being in the $\frac00$ case is gauge independent, i.e., the $\frac00$ case cannot be removed through the choice of gauge.} On the other hand, this case does appear in practice; the ground state energy of the free relativistic fermion (section~\ref{sec:free-relativistic-Fermions}), for instance, is of this form. To overcome the problem that different choices of gauge in numerator and denominator can generate arbitrary results, we conjecture that choosing the same gauge should give a meaningful choice physically.

\begin{itemize}
\item[(iii)] Does the regularization contain the known special cases of well-defined path integrals (e.g., Wick rotated or space-time discretized)?
\end{itemize}

Yes. Space-time discretization replaces the operators by matrices. Hence all traces are well-defined and the construction of the $\zeta$-regularization coincides with the canonical trace on trace-class operators. Similarly, Wick rotations yield pseudo-differential operators and there it is known that the $\zeta$-regularization used in this context is the unique extension of the canonical trace.

\begin{itemize}
\item[(iv)] Is the construction physically ``meaningful''?
\end{itemize}

This question can be interpreted in different ways. On one hand, we may ask if the regularization can be interpreted physically. In this sense, choosing the same gauge for both numerator and denominator in the expectation value of observables is important. This means that we replace the time-evolution of our system by a holomorphic family of time-evolutions. In other words, we consider a holomorphic family of physical systems and conjecture that the physical values of the system to be studied can be obtained by analytic continuation.

On the other hand, we need to ask whether or not the regularized theory is physically correct. By construction, we know that the regularized theory coincides with the physical theory if we have trace-class operators to begin with. In case the regularization is necessary, we have considered a number of physical models. In each of these models, the regularization not only recovered the known physical values but is also computable. In fact, the dependence on the regularizing parameter is known explicitly and the remaining integrals are over compact manifolds.

\begin{appendix}
\section{The generalized Kontsevich-Vishik trace}\label{sec:KV}
In this appendix, we will give a non-technical overview of Fourier Integral Operator $\zeta$-functions and the (generalized) Kontsevich-Vishik trace. For more detail, please refer to~\cite{hartung-phd,hartung-scott}.

Given a closed, compact, orientable, connected, finite dimensional Riemannian manifold $X$ and a closed conic Lagrangian submanifold $\Lambda$ of $T^*X^2\setminus0$, we can consider the space $I^m(X^2;\Lambda)$ of Lagrangian distributions of order $m$ with microsupport in $\Lambda$ (cf. Chapter 25 in~\cite{hoermander-books}). Integral operators with kernels in some $I^m(X^2;\Lambda)$ are called Fourier Integral Operators. More precisely, we have the following definition.

\begin{definition}
  Let $X$ be a $C^\infty$ manifold, $E$ a (complex) vector bundle over $X$, and $Y$ a closed $C^\infty$ sub-manifold of $X$. Then, the space $I^m(X,Y;E)$ of distribution sections of $E$ that are conormal to $Y$ and of order less than or equal to $m$ is the set of all distributions $u\in C_c^\infty(X,E)'$ such that 
  \begin{align*}
    L_1\ldots L_Nu\in B_{2,\infty,\loc}^{-m-\frac{\dim X}{4}}(X,E)
  \end{align*}
  for all $N\in\nn_0$ and all first order differential operators $L_j$ between distribution sections of $E$ whose coefficients are $C^\infty$ tangential to Y.
\end{definition}

\begin{remark*}
  Here, $B_{p,q}^s(\rn[n])$ denotes the usual Besov space and, for $U\sse\rn[n]$ open, we define $B_{p,q,\loc}^s(U)$ as the set of distributions $u\in C_c^\infty(U)'$ such that $\fa \phi\in C_c^\infty(U):\ \phi u\in B_{p,q}^s(\rn[n])$. This definition can then be lifted to manifolds in the usual manner.
\end{remark*}

The definition of conormality can be extended to pseudo-differential operators from $E$ to $E$ with principal symbol vanishing on $Y$. Thus, it can be extended to Lagrangian manifolds.

\begin{definition}
  Let $X$ be a $C^\infty$ manifold, $E$ a (complex) vector bundle over $X$, and $\Lambda\sse T^*X\setminus 0$ a closed, conic, $C^\infty$, Lagrangian sub-manifold. Then, the space $I^m(X,\Lambda;E)$ of Lagrangian distribution sections of $E$ of order less than or equal to $m$ is the set of all distributions $u\in C_c^\infty(X,E)'$ such that 
  \begin{align*}
    L_1\ldots L_Nu\in B_{2,\infty,\loc}^{-m-\frac{\dim X}{4}}(X,E)
  \end{align*}
  for all $N\in\nn_0$ and all properly supported first order pseudo-differential operators $L_j\in\Psi^1(X;E,E)$ whose principal symbols vanish on $\Lambda$.
\end{definition}

It is common to denote $\Lambda$ in terms of a canonical relation $\Gamma\sse\l(T^*X\setminus0\r)^2$ (cf., e.g., Chapter 1 in~\cite{hartung-phd}) which satisfies
\begin{align*}
  \Lambda=\Gamma':=\l\{((x,\xi),(y,\eta))\in\l(T^*X\setminus0\r)^2;\ ((x,\xi),(y,-\eta))\in\Gamma\r\}.
\end{align*}
If $\Gamma$ is chosen to be a homogeneous canonical relation (cf., e.g., Chapter 1 in~\cite{hartung-phd}, Theorem 2.4.1 in~\cite{duistermaat}, and Example 1 in~\cite{guillemin-residues}), then the set of operators $\Ap_\Gamma$ with kernels in $\bigcup_{m\in\rn}I^m(X^2;\Gamma')$ forms an associative algebra. Furthermore, it can be shown that $\Ap_\Gamma$ has a non-trivial intersection with the set of trace-class operators in $L(L_2(X))$ (cf. Lemmata 1.12 and 1.13 in~\cite{hartung-phd}); more precisely, if $A\in\Ap_\Gamma$ has kernel\footnote{An integral operator $A$ has kernel $k$ if and only if $Af(x)=\int k(x,y)f(y)dy$ holds for all $f$ in the domain of $A$.} $k\in I^m(X^2;\Gamma')$ with $m$ sufficiently small, then $A$ is of trace-class, $k$ continuous, and
\begin{align*}
  \tr A=\int_Xk(x,x)d\vol_X(x).
\end{align*}
In many applications (like the Feynman path integral; cf. section~\ref{sec:feynman}) we would like to extend this trace to operators that are not of trace-class. Such an extension of the trace can be obtained using $\zeta$-regularization. Let $A_0\in\Ap_\Gamma$ with kernel $k_0\in I^m(X^2;\Gamma')$. Then, we consider a holomorphic family $A\in C^\omega\l(\cn,\Ap_\Gamma\r)$ with kernels $k\in C^\omega\l(\cn,I^m(X^2;\Gamma')\r)$ such that $k(0)=k_0$ and $\fa z\in\cn:\ k(z)\in I^{m+\Re(z)}(X;\Gamma')$, and define $\zeta(A)$ to be the maximal meromorphic extension of
\begin{align*}
  \zeta(A)(z):= \tr A(z)=\int_Xk(z)(x,x)d\vol_X(x)
\end{align*}
which is well-defined for $\Re(z)$ sufficiently small, that is, $\Re(z)\ll0$.

An important class of holomorphic families of Fourier Integral Operators are gauged Fourier Integral Operators with $\log$-polyhomogeneous amplitudes. These gauged Fourier Integral Operators have kernels of the form
\begin{align*}
  k(z)(x,y)=\int_{\rn[N]}e^{i\theta(x,y,\xi)}a(z)(x,y,\xi)d\xi
\end{align*}
where $\theta$ is a phase function 
\begin{align*}
  \theta(x,y,\xi)=\theta\l(x,y,\frac{\xi}{\norm\xi_{\ell_2(N)}}\r)\norm\xi_{\ell_2(N)}
\end{align*}
and
\begin{align*}
  a(z)(x,y,\xi)=a_0(z)(x,y,\xi)+\sum_{\iota\in I}a_\iota(z)(x,y,\xi)
\end{align*}
where $a_0(z)\in L_1(X\times X\times\rn[N])$ and
\begin{align*}
  a_\iota(z)(x,y,\xi)=\norm\xi_{\ell_2(N)}^{d_\iota+z}\l(\ln\norm\xi_{\ell_2(N)}\r)^{l_\iota}\tilde a_\iota\l(x,y,\frac{\xi}{\norm\xi_{\ell_2(N)}}\r)
\end{align*}
holds with a number of additional properties making everything well-defined (cf. Chapter 2 in~\cite{hartung-phd}). We call $d_\iota$ the degree of homogeneity of $a_\iota$ and $l_\iota$ the logarithmic order. If all $l_\iota$ vanish, then we call the amplitude polyhomogeneous.

Fourier Integral Operator $\zeta$-functions with polyhomogeneous amplitudes were shown to exist as meromorphic functions on $\cn$ and had their residues studied by Guillemin~\cite{guillemin-lagrangian,guillemin-residues}. Using Guillemin's approach and introducing the notion of gauged poly-$\log$-homogeneous distributions, the author~\cite{hartung-phd,hartung-scott} was able to compute the Laurent expansion of $\zeta(A)$ for Fourier Integral Operators with $\log$-polyhomogeneous amplitudes, as well. In particular, it can be shown that $\zeta(A)$ only has isolated poles of finite order. The poles are located at $z=-N-d_\iota$ and the maximal pole order is $l_\iota+1$.

While the residues of $\zeta(A)$ yield important traces (cf., e.g.,~\cite{guillemin-residues}), we are interested in the values $\zeta(A)(0)$ provided none of the degrees of homogeneity satisfies $d_\iota=-N$. Then, $\zeta(A)$ is holomorphic in a neighborhood of zero and $\zeta(A)(0)$ depends only on $A(0)$. In fact,
\begin{align*}
  A_0\ \rightsquigarrow\ A\in C^\omega(\cn,\Ap_\Gamma)\text{ gauged with }A(0)=A_0\ \rightsquigarrow\ \zeta(A)(0)
\end{align*}
defines a trace provided the amplitude of $A_0$ has no critical degree of homogeneity $d_\iota=-N$ (cf. Chapter 7 in~\cite{hartung-phd}). This trace is called the generalized Kontsevich-Vishik trace (the original Kontsevich-Vishik trace is the special case of $A_0$ being a classical pseudo-differential operator) and given by
\begin{align*}
    \begin{aligned}
      \zeta(A)(0)=&\int_X\int_{B_{\rn[N]}(0,1)}e^{i\theta(x,x,\xi)}a(0)(x,x,\xi)\ d\xi\ d\vol_X(x)\\
      &+\int_{\rn_{\ge1}\times\d B_{\rn[N]}}\int_Xe^{i\theta(x,x,\xi)}a_0(0)(x,x,\xi)\ d\vol_X(x)\ d\vol_{\rn_{\ge1}\times\d B_{\rn[N]}}(\xi)\\
      &+\sum_{\iota\in I}\frac{(-1)^{l_\iota+1}l_\iota!\int_{X\times\d B_{\rn[N]}}e^{i\theta(x,x,\xi)}\tilde a_\iota(0)(x,x,\xi)\ d\vol_{X\times\d B_{\rn[N]}}(x,x,\xi)}{(N+d_\iota)^{l_\iota+1}}.
    \end{aligned}
\end{align*}

By construction, the generalized Kontsevich-Vishik trace coincides with $\tr$ on trace-class operators. Furthermore, it was shown that the Kontsevich-Vishik trace is the only trace on the algebra of classical pseudo-differential operators that restricts to $\tr$ in $L(L_2(X))$~\cite{maniccia-schrohe-seiler}. These properties make the generalized Kontsevich-Vishik trace a prime candidate for path integral regularization as such a path integral regularization is consistent with respect to discretization (turning operators into matrices and, thus, trace-class) and Wick rotations (turning the path integral into pseudo-differential operator traces).  
\end{appendix}

\begin{bibdiv}
  \begin{biblist}
    \bib{jansen-cichy}{article}{
      author={BA\~{N}ULS, M. C.},
      author={CICHY, K.},
      author={CIRAC, J. I.},
      author={JANSEN, K.},
      title={The mass spectrum of the Schwinger model with Matrix Product States},
      journal={Journal of High Energy Physics},
      volume={158},
      date={2013}
    }
    \bib{creutz-freedman}{article}{
      author={CREUTZ, M.},
      author={FREEDMAN, B.},
      title={A Statistical Approach to Quantum Mechanics},
      journal={Annals of Physics},
      volume={132},
      pages={427-462},
      date={1981}
    }
    \bib{degrand-detar}{book}{
      author={DEGRAND, T.},
      author={DETAR, C.},
      title={Lattice Methods for Quantum Chromodynamics},
      publisher={World Scientific},
      address={Singapore},
      date={2006}
    }
    \bib{duistermaat}{book}{
      author={DUISTERMAAT, J. J.},
      title={Fourier Integral Operators},
      publisher={Birkh\"{a}user},
      date={1996}
    }
    \bib{feynman}{article}{
      author={FEYNMAN, R. P.},
      title={Space-Time Approach to Non-Relativistic Quantum Mechanics},
      journal={Reviews of Modern Physics},
      volume={20},
      pages={367-387},
      date={1948}
    }
    \bib{feynman-hibbs-styer}{book}{
      author={FEYNMAN, R. P.},
      author={HIBBS, A. R.},
      author={STYER, D. F.},
      title={Quantum Mechanics and Path Integrals},
      publisher={Dover Publications, Inc.},
      edition={Emended Edition},
      address={Mineola, NY},
      date={2005}
    }
    \bib{gattringer-lang}{book}{
      author={GATTRINGER, C.},
      author={LANG, C. B.},
      title={Quantum Chromodynamics on the Lattice},
      publisher={Springer},
      address={Berlin/Heidelberg},
      date={2010}
    }
    \bib{gibbons-hawking-perry}{article}{
      author={GIBBONS, G. W.},
      author={HAWKING, S. W.},
      author={PERRY, M. J.},
      title={Path integrals and the indefiniteness of the gravitational action},
      journal={Nuclear Physics},
      volume={B138},
      pages={141-150},
      date={1978}
    }
    \bib{guillemin-lagrangian}{article}{
      author={GUILLEMIN, V.},
      title={Gauged Lagrangian Distributions},
      journal={Advances in Mathematics},
      volume={102},
      pages={184-201},
      date={1993}
    }
    \bib{guillemin-residues}{article}{
      author={GUILLEMIN, V.},
      title={Residue Traces for certain Algebras of Fourier Integral Operators},
      journal={Journal of Functional Analysis},
      volume={115},
      pages={391-417},
      date={1993}
    }
    \bib{hartung-phd}{book}{
      author={HARTUNG, T.},
      title={$\zeta$-functions of Fourier Integral Operators},
      publisher={Ph.D. thesis, King's College London},
      address={London},
      date={2015}
    }
    \bib{hartung-scott}{article}{
      author={HARTUNG, T.},
      author={SCOTT, S.},
      title={A generalized Kontsevich-Vishik trace for Fourier Integral Operators and the Laurent expansion of $\zeta$-functions},
      journal={arXiv:1510.07324v2~[math.AP]},
    }
    \bib{hawking}{article}{
      author={HAWKING, S. W.},
      title={Zeta Function Regularization of Path Integrals in Curved Spacetime},
      journal={Communications in Mathematical Physics},
      volume={55},
      pages={133-148},
      date={1977}
    }
    \bib{hoermander-books}{book}{
      author={H\"{O}RMANDER, L.},
      title={The Analysis of Linear Partial Differential Operators},
      part={I-IV},
      publisher={Springer},
      address={Berlin/Heidelberg},
      date={1990}
    }
    \bib{johnson-freyd}{article}{
      author={JOHNSON-FREYD, T.},
      title={The formal path integral and quantum mechanics},
      journal={Journal of Mathematical Physics},
      volume={51},
      date={2010}
    }
    \bib{kontsevich-vishik}{article}{
      author={KONTSEVICH, M.},
      author={VISHIK, S.},
     title={Determinants of elliptic pseudo-differential operators},
      journal={Max Planck Preprint, arXiv:hep-th/9404046},
      date={1994}
    }
    \bib{kontsevich-vishik-geometry}{article}{
      author={KONTSEVICH, M.},
      author={VISHIK, S.},
      title={Geometry of determinants of elliptic operators},
      journal={Functional Analysis on the Eve of the XXI century, Vol. I, Progress in Mathematics},
      volume={131},
      pages={173-197},
      date={1994}
    }
    \bib{kumano-go-I}{article}{
      author={KUMANO-GO, N.},
      title={Phase space Feynman path integrals with smooth functional derivatives by time slicing approximation},
      journal={Bulletin des sciences mathematiques},
      volume={135},
      pages={936-987},
      date={2011}
    }
    \bib{kumano-go-II}{article}{
      author={KUMANO-GO, N.},
      author={FUJIWARA, D.},
      title={Phase space Feynman path integrals via piecewise bicharacteristic paths and their semiclassical approximations},
      journal={Bulletin des sciences mathematiques},
      volume={132},
      pages={313-357},
      date={2008}
    }
    \bib{kumano-go-III}{article}{
      author={KUMANO-GO, N.},
      author={VASUDEVA MURTHY, A. S.},
      title={Phase space Feynman path integrals of higher order parabolic type with general functional as integrand},
      journal={Bulletin des sciences mathematiques},
      volume={139},
      pages={495-537},
      date={2015}
    }
    \bib{maniccia-schrohe-seiler}{article}{
      author={MANICCIA, L.},
      author={SCHROHE, E.},
      author={SEILER, J.},
      title={Uniqueness of the Kontsevich-Vishik trace},
      journal={Proceedings of the American Mathematical Society},
      volume={136 (2)},
      pages={747-752},
      date={2008}
    }
    \bib{montvay-muenster}{book}{
      author={MONTVAY, I.},
      author={M\"{U}NSTER, G.},
      title={Quantum Fields on a Lattice},
      publisher={Cambridge University Press},
      address={Cambridge},
      date={1994}
    }
    \bib{paycha}{article}{
      author={PAYCHA, S.},
      title={Zeta-regularized traces versus the Wodzicki residue as tools in quantum field theory and infinite dimensional geometry},
      journal={Proceedings of the International Conference on Stochastic Analysis and Applications},
      pages={69-84},
      date={2001}
    }
    \bib{ray}{article}{
      author={RAY, D. B.},
      title={Reidemeister torsion and the Laplacian on lense spaces},
      journal={Advances in Mathematics},
      volume={4},
      pages={109-126},
      date={1970}
    }
    \bib{ray-singer}{article}{
      author={RAY, D. B.},
      author={SINGER, I. M.},
      title={$R$-torsion and the Laplacian on Riemannian manifolds},
      journal={Advances in Mathematics},
      volume={7},
      pages={145-210},
      date={1971}
    }
    \bib{schwinger}{article}{
      author={SCHWINGER, J.},
      title={Gauge Invariance and Mass II},
      journal={Physical Review},
      volume={128},
      pages={2425-2429},
      date={Dec. 1962}
    }
    \bib{takhtajan}{book}{
      author={TAKHTAJAN, L. A.},
      title={Quantum Mechanics for Mathematicians},
      publisher={American Mathematical Society},
      address={Providence, RI},
      date={2008}
    }
    \bib{wilson}{article}{
      author={WILSON, K. G.},
      title={Confinement of quarks},
      journal={Physical Review D},
      volume={10},
      pages={2445},
      date={1974}
    }
  \end{biblist}
\end{bibdiv}

\end{document}